\documentclass[runningheads]{llncs}
\usepackage[
left=2.5cm, right=2.5cm, top=3cm, bottom=3cm, 
headheight = 0.6cm,
footskip = 1cm
]{geometry}
\usepackage{amsmath,amssymb,amsfonts}
\usepackage{algorithmic}
\usepackage{algorithm}
\usepackage{cite} 
\usepackage{hyperref} 

\usepackage[T1]{fontenc}
\usepackage{graphicx}
\usepackage{color}

\begin{document}
\title{Max-Min Secrecy Rate Optimization for Secure ISAC Networks: Global Optimization and Low-Complexity Algorithm}
%
\titlerunning{Max-Min Secrecy Rate Optimization for Secure ISAC Networks}
%

\author{
Thanh-Nha To \and
Trung Quang Pham \and
Dang Y Hoang \and
Hoang-Lai Pham \and
Tuan Anh Pham
}
\authorrunning{T.N. To et al.}
%
\institute{
Viettel High Technology Industries Corporation, Viettel Group, Hanoi, Vietnam\\
\email{\{nhatt30, trungpq12, yhd10, haidv29, laiph3, tuanpa44\}@viettel.com.vn}
}
\maketitle              
\begin{abstract}
In this paper, we investigate a secure integrated sensing and communication (ISAC) system in which multiple communication users (CUs) coexist with multiple untrusted sensing users (SUs) that may eavesdrop on the confidential information intended for the CUs. To promote security fairness among users, we formulate a max–min secrecy rate optimization problem subject to a transmit power budget and sensing quality requirements characterized by beampattern matching error constraints. The resulting design problem is highly non-convex due to the secrecy rate expressions and non-convex sensing constraints. To address these challenges, we first reformulate the problem using semidefinite relaxation (SDR). Based on the reformulated problem, we develop a branch-and-bound (BB) framework combined with convex relaxations to obtain the globally optimal solution within a prescribed accuracy. To further reduce computational complexity, we propose a low-complexity algorithm based on successive convex approximation (SCA), which iteratively solves a sequence of convex subproblems and converges to a local solution. Numerical results demonstrate that the proposed BB algorithm achieves the global optimum and provides a benchmark for performance evaluation. Moreover, the proposed SCA-based algorithm attains near-optimal secrecy performance with significantly lower computational complexity, making it attractive for practical ISAC deployments. 

\keywords{Global optimality \and Branch-and-bound (BB) \and Successive convex approximation (SCA) \and Physical layer security (PLS) \and Integrated sensing and communication (ISAC).}
\end{abstract}

\section{Introduction}
\subsection{Overview}
With the rapid growth in both the number and functionality of smart devices, next-generation mobile networks are expected not only to support conventional communication services but also to simultaneously meet stringent requirements such as high data rates, low latency, strong security, and environmental sensing capabilities. In this context, integrated sensing and communication (ISAC) has emerged as a promising paradigm, enabling the integration of sensing and communication functionalities within a unified system. This integration significantly expands the range of applications, including unmanned aerial vehicles (UAVs), intelligent transportation systems, industrial automation, and next-generation mobile networks \cite{Liu25Toward, Liu22Survey}.

However, such multi-functional integration also introduces significant security challenges. Physical layer security (PLS) techniques have been widely recognized as an effective approach to ensuring system security by exploiting channel characteristics to achieve a higher transmission rate for legitimate users than that of potential eavesdroppers \cite{Wu22Secure}. Meanwhile, ISAC systems inherently involve trade-offs among multiple design objectives, particularly among secrecy rate, communication performance, and sensing quality. Optimizing one objective often leads to the degradation of others, making the system design problem highly complex and non-convex.

\subsection{Related Works} 
Recent studies have partially addressed different aspects of this problem. For instance, the use of active intelligent reflecting surfaces (Active IRS) has been proposed to enhance secrecy performance, while artificial noise (AN) techniques have been employed to degrade the eavesdropping capability of adversaries while maintaining sensing quality \cite{Bao24Secrecy}. In addition, globally optimal solutions have been explored to improve system performance in specific scenarios \cite{chen2025global, lu19Global}. By leveraging semidefinite relaxation (SDR), these challenging problems can be addressed by obtaining globally optimal solutions through rigorous proofs of relaxation tightness \cite{ren2022optimal}. Achieving such globally optimal solutions plays a crucial role in maximizing radar sensing performance or designing optimal transmit beamforming, while fully satisfying the signal-to-interference-plus-noise ratio (SINR) requirements for communication users under system constraints \cite{ren2023robust}.

In practice, the exact location of a sensing target is often unknown, which may cause the target to act as a potential eavesdropper and threaten communication security \cite{hou2024optimal}. To address this uncertainty, the worst-case secrecy rate criterion is commonly adopted to ensure robustness by maximizing the secrecy rate under the most unfavorable eavesdropping conditions. These criteria, combined with sensing quality constraints, result in a highly complex and non-convex optimization problem \cite{ren2024secure}. To effectively tackle this challenge, global optimization methods are often employed as a promising solution, enabling the design of globally optimal beamforming strategies that achieve a balance between communication security and sensing performance \cite{xu2023sensing, xu2025sensing}.

Nevertheless, despite the successful application of PLS techniques and global optimization approaches in ISAC systems, existing studies mainly focus on solving individual problems in isolation and are typically limited to single-antenna and single-user scenarios. Extending the global optimization framework to multi-user ISAC systems while ensuring secure communication introduces significantly higher mathematical complexity due to the coupled and high-dimensional nature of the problem. This constitutes a critical research gap that this work aims to address.

\subsection{Contributions} 
Nevertheless, existing works mainly rely on local optimization methods and rarely address fairness in secrecy performance in the presence of untrusted SUs. Motivated by this gap, we study a secure multi-user ISAC system where SUs may also act as potential eavesdroppers. The main contributions of this paper are summarized as follows:

\paragraph{System Modeling} We develop a multi-user ISAC framework in which multiple untrusted SUs simultaneously perform sensing and eavesdropping. We formulate a max--min secrecy rate optimization problem to ensure fairness among users, under practical transmit power and sensing quality constraints characterized by beampattern matching error.

\paragraph{Global Methodology} 
We propose a branch-and-bound (BB) method with convex relaxation, ensuring convergence to a global optimum. Tight upper and lower bounds are derived to improve computational efficiency and provide a benchmark for evaluating suboptimal methods.

\paragraph{Low-Complexity Methodology} 
We develop a low-complexity algorithm based on successive convex approximation (SCA) in order to overcome the computational complexity of overly large-scale problems in the BB framework. This practical approach iterates through a series of convex subproblems to efficiently find a high-quality local solution.

\paragraph{Simulations} Numerical results demonstrate that the proposed globally optimal design significantly outperforms conventional local optimization approaches, and reveal the performance gap between global and local solutions, especially in high-dimensional settings.

\section{System Model and Problem Formulation}
\subsection{Signal Model}
We consider a multi-user ISAC system, where a base station (BS) equipped with a uniform linear array (ULA) of $N_B$ antennas simultaneously serves $K$ single-antenna legitimate users and senses $J$ untrusted targets. These targets may also act as potential eavesdroppers capable of intercepting the information signals intended for legitimate users (e.g., UAV-like targets).
Let $\mathcal{K} = \{1,2,\dots,K\}$ and $\mathcal{J} = \{1,2,\dots,J\}$ denote the sets of communication users and sensing targets, respectively. The BS transmits information symbols $x_k \in \mathbb{C}$ to user $k$ and an artificial noise (AN) vector $\mathbf{x}_0 \in \mathbb{C}^{N_B}$ to enhance both sensing performance and communication security, where $\mathbb{E}\{|x_k|^2\} = 1$ for all $k \in \mathcal{K}$, $\mathbb{E}\{\mathbf{x}_0\} = \mathbf{0}$, and $\mathbb{E}\{\mathbf{x}_0 \mathbf{x}_0^H \} = \mathbf{I}_{N_B}$.

The transmitted signal at the BS is given by
\begin{align} 
\mathbf{s} = \mathbf{W}_0 \mathbf{x}_0 + \sum_{k \in \mathcal{K}} \mathbf{w}_k x_k,
\end{align}
where $\mathbf{W}_0 \in \mathbb{C}^{N_B \times N_B}$ denotes the AN beamforming matrix, and $\mathbf{w}_k \in \mathbb{C}^{N_B}$ is the beamforming vector for user $k$. For notational convenience, we define $\mathbf{W} = [\mathbf{W}_0, \mathbf{w}_1, \dots, \mathbf{w}_K] \in \mathbb{C}^{N_B \times (N_B + K)}$ as the collection of all transmit beamformers.

The total transmit power at the BS is expressed as
\begin{align}
P_{\mathrm{BS}}(\mathbf{W})= \mathbb{E} \{ \|\mathbf{s} \|^2 \} = \|\mathbf{W}_0\|_F^2 + \sum_{k \in \mathcal{K}} \|\mathbf{w}_k\|^2.
\end{align}
\subsection{Communication Model}

The achievable rate at legitimate user $k \in \mathcal{K}$ is given by
\begin{align}
R_k(\mathbf{W}) = \log_2\left(1 + \frac{|\mathbf{h}_{c,k}^H \mathbf{w}_k|^2}
{\| \mathbf{W}_0^H \mathbf{h}_{c,k}\|^2_F
+ \sum_{i \in \mathcal{K} \setminus \{k\}} |\mathbf{h}_{c,k}^H \mathbf{w}_i|^2 
+ \sigma_{c,k}^2} \right),
\end{align}
Similarly, the achievable eavesdropping rate of target $j \in \mathcal{J}$ when intercepting the signal of user $k \in \mathcal{K}$ is
\begin{align}
R_{j,k}(\mathbf{W}) = \log_2\left(1 + \frac{|\mathbf{h}_{s,j}^H \mathbf{w}_k|^2}
{\|\mathbf{W}_0^H \mathbf{h}_{s,j}\|^2_F
+ \sum_{i \in \mathcal{K} \setminus \{k\}} |\mathbf{h}_{s,j}^H \mathbf{w}_i|^2 
+ \sigma_{s,j}^2} \right).
\end{align}
The secrecy rate of user $k$, assuming the worst-case eavesdropping target, is defined as
\begin{align}
R_k^{\mathrm{sec}}(\mathbf{W}) =
\left[
R_k(\mathbf{W}) - \max_{j \in \mathcal{J}} R_{j,k}(\mathbf{W})
\right]^+,
\end{align}
where $[z]^+ = \max(z,0)$.

\subsection{Sensing Model}
Next, we consider target sensing, where the transmit beampattern matching error is used as the performance metric. In addition to communication, the BS performs sensing by transmitting probing signals and shaping the spatial beampattern toward directions of interest. 

We consider the line-of-sight (LoS) channel from the BS to untrusted targets, as commonly assumed in prior works \cite{ren2023robust, hou2024optimal}. Let $\theta_j$ denote the angle of departure (AoD) from the BS to target $j \in \mathcal{J}$. The steering vector corresponding to angle $\theta$ is given by $\boldsymbol{\nu}(\theta) = 
\left[1, e^{j\frac{2\pi d}{\lambda}\sin\theta}, \dots, 
e^{j\frac{2\pi d}{\lambda}(N_B-1)\sin\theta} \right]^T$. The resulting transmit beampattern at angle $\theta$ is defined as $\mathbb{E} \{|\boldsymbol{\nu}^H (\theta) \mathbf{s}|^2\}$. 
Let $P_d(\theta)$ denote the desired beampattern, which is typically designed to concentrate energy toward sensing targets while suppressing sidelobes in other directions. To ensure sensing performance, we discretize the angular domain into a finite set $\Theta$.
The transmit covariance matrix is given by $\mathbb{E}\{\mathbf{s}\mathbf{s}^H\}
= \mathbf{W}_0 \mathbf{W}_0^H + \sum_{k \in \mathcal{K}} \mathbf{w}_k \mathbf{w}_k^H$. The sensing quality is quantified by the beampattern matching error defined as
\begin{align}
E(\mathbf{W}, \delta) = \frac{1}{|\mathrm{\Theta} |} \sum_{\theta \in \mathrm{\Theta}} 
\left| \boldsymbol{\nu}^H(\theta) \left( \mathbf{W}_0 \mathbf{W}_0^H + \sum_{k \in \mathcal{K}} \mathbf{w}_k \mathbf{w}_k^H \right) \boldsymbol{\nu}(\theta) - \delta P_d(\theta) \right|^2,
\end{align}
where $\delta > 0$ is a scaling factor.
 
\subsection{Problem Formulation} 
The objective is to design the BE transmit beamformer to maximize the minimum secrecy rate among all CUs, thereby ensuring fairness in secure communication. Meanwhile, the design must satisfy practical constraints, including the beampattern matching error, and the BS transmit power budget.
The max--min secrecy rate optimization problem is formulated as
\begin{subequations}
\label{eq:P0}
\begin{align}
\max_{\mathbf{W}, \delta} \quad & \min_{k \in \mathcal{K}} R^{\mathrm{sec}}_k(\mathbf{W}) \label{eq:P0-R} \\
\text{s.t.} \quad 
     & E(\mathbf{W}, \delta) \leq E^{\max}, \label{eq:P0-E} \\
     & P_{{b}}(\mathbf{W}) = P_b^{\max}. \label{eq:P0-P}
\end{align}
\end{subequations}
Notice that problem \eqref{eq:P0} is highly non-convex due to the non-concave max--min secrecy rate objective \eqref{eq:P0-R}, which involves coupled signal and interference terms, as well as the non-convex sensing constraint \eqref{eq:P0-E}. These challenges make the problem difficult to solve using conventional convex optimization techniques.

\section{Proposed Global Beamforming Algorithm}

\subsection{Problem Reformulation}

First, we introduce auxiliary variables $\boldsymbol{\alpha} \triangleq [\alpha_1,\dots,\alpha_K]$, $\boldsymbol{\beta} \triangleq [\beta_1,\dots,\beta_K]$,  and $\gamma$.  Then, problem \eqref{eq:P0} can be equivalently reformulated as
\begin{subequations}
\begin{align}
\max_{\mathbf{W}, \boldsymbol{\alpha}, \boldsymbol{\beta}, \gamma, \delta} \quad 
& \gamma \\
\text{s.t.} \quad 
& R_k(\mathbf{W}) \ge \alpha_k, \quad \forall k \in \mathcal{K}, \\
& R_{j,k}(\mathbf{W}) \le \beta_k, \quad  \forall j \in \mathcal{J}, \forall k \in \mathcal{K},\\
& \alpha_k - \beta_k \ge \gamma, \quad \forall k \in \mathcal{K}, \\
& \eqref{eq:P0-E}, \eqref{eq:P0-P}.
\end{align}
\end{subequations}
Define the covariance matrices
\begin{align}
\mathbf{V}_0 &= \mathbf{W}_0 \mathbf{W}_0^H, \quad
\mathbf{V}_k = \mathbf{w}_k \mathbf{w}_k^H, \quad \mathbf{V} = [\mathbf{V}_k]_{k \in \mathcal{K}}, \\
\mathbf{R} &= \sum_{k=0}^{K} \mathbf{V}_k, \quad
\mathbf{G}_{d, k} = \frac{\mathbf{h}_{d,k} \mathbf{h}_{d,k} ^H}{\sigma_{d,k}^2}, \quad \forall d \in \{c,s\}.
\end{align}
Then, the problem can be reformulated as
\begin{subequations}
\label{eq:P1}
\begin{align}
\max_{\mathbf{V}, \mathbf{R}, \boldsymbol{\alpha}, \boldsymbol{\beta}, \gamma, \delta} \quad 
& \gamma \\
\text{s.t.} \quad \quad \
& \ln \! \left(1 + \frac{\mathrm{tr}(\mathbf{G}_{c,k} \mathbf{V}_k)}
{\mathrm{tr}(\mathbf{G}_{c, k} (\mathbf{R} -  \mathbf{V}_k)) + 1} \right) \ge \alpha_k , \quad \forall k \in \mathcal{K}, \label{eq:P1-alpha} \\
& \ln \! \left(1 + \frac{\mathrm{tr}(\mathbf{G}_{s, j} \mathbf{V}_k)}
{\mathrm{tr}(\mathbf{G}_{s, j} (\mathbf{R} - \mathbf{V}_k))  + 1} \right) \le \beta_k, \quad \forall j \in \mathcal{J}, \forall k \in \mathcal{K}, \label{eq:P1-beta} \\
& \alpha_k - \beta_k \ge \gamma, \quad \forall k \in \mathcal{K}, \label{eq:P1-gamma} \\
& \sum_{\theta \in \mathrm{\Theta} } \left| \boldsymbol{\nu}^H (\theta) \mathbf{R} \boldsymbol{\nu}(\theta) - \delta P_d(\theta) \right|^2 \le | \mathrm{\Theta}  | E^{\max}, \label{eq:P1-E} \\
& \mathrm{tr} (\mathbf{R}) \le P_b^{\max}, \label{eq:P1-P} \\
& \mathbf{R} \succeq \sum_{k \in \mathcal{K}} \mathbf{V}_k, \label{eq:P1-R} \\
& \mathbf{V}_k \succeq 0, \quad \forall k \in \mathcal{K}, \label{eq:P1-V} \\
& \mathrm{rank}(\mathbf{V}_k) = 1, \quad \forall k \in \mathcal{K}. \label{eq:P1-rank}
\end{align}
\end{subequations}
Problem \eqref{eq:P1} remains challenging due to the non-convex rank-one constraints. By relaxing these constraints, we obtain the following problem:
\begin{align}
\label{eq:P2}
\max_{\mathbf{V}, \mathbf{R}, \boldsymbol{\alpha}, \boldsymbol{\beta}, \gamma, \delta} \quad \gamma 
\quad \text{s.t. } \eqref{eq:P1-alpha}-\eqref{eq:P1-V}.
\end{align}

\begin{theorem}
Let $\{\mathbf{V}^\star, \mathbf{R}^\star, \boldsymbol{\alpha}^\star, \boldsymbol{\beta}^\star, \gamma^\star, \delta^\star\}$ be a globally optimal solution of problem \eqref{eq:P2}. Then, there exists a globally optimal solution $\{\tilde{\mathbf{V}}, \mathbf{R}^\star, \boldsymbol{\alpha}^\star, \boldsymbol{\beta}^\star, \gamma^\star, \delta^\star\}$ such that $\mathrm{rank}(\tilde{\mathbf{V}}_k) = 1$ for all $k \in \mathcal{K}$, where
\begin{align}
 \tilde{\mathbf{V}}_k &= \frac{\mathbf{V}_k^* \mathbf{G}_{c,k} \mathbf{V}_k^*}
{\mathrm{tr} (\mathbf{G}_{c,k} \mathbf{V}_k^*)}, \quad \forall k \in \mathcal{K},  \\
 \tilde{\mathbf{V}} &= [\tilde{\mathbf{V}}_k ]_{  k \in \mathcal{K}}.
\end{align}
\end{theorem}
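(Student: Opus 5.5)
We follow the standard rank-reduction argument: keep $\mathbf{R}^\star$ fixed, replace each $\mathbf{V}_k^\star$ by $\tilde{\mathbf{V}}_k$, and check that every constraint of \eqref{eq:P2} still holds with the same $\boldsymbol{\alpha}^\star,\boldsymbol{\beta}^\star,\gamma^\star,\delta^\star$. Since the objective $\gamma$ is unchanged, the new point is again globally optimal. We may assume $\mathrm{tr}(\mathbf{G}_{c,k}\mathbf{V}_k^\star)>0$. Otherwise \eqref{eq:P1-alpha} forces $\alpha_k^\star\le 0$, and the degenerate case is handled separately, e.g.\ by setting $\tilde{\mathbf{V}}_k=\mathbf{0}$, or noting that it cannot occur at an optimum with positive secrecy rate.

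\textbf{Structural facts.} Write $\mathbf{G}_{c,k}=\mathbf{g}\mathbf{g}^H$ with $\mathbf{g}=\mathbf{h}_{c,k}/\sigma_{c,k}$. Then
\begin{align}
\tilde{\mathbf{V}}_k=\frac{(\mathbf{V}_k^\star\mathbf{g})(\mathbf{V}_k^\star\mathbf{g})^H}{\mathbf{g}^H\mathbf{V}_k^\star\mathbf{g}},
\end{align}
so $\tilde{\mathbf{V}}_k$ is PSD with rank one, which gives \eqref{eq:P1-V} and the rank condition. A direct computation gives $\mathrm{tr}(\mathbf{G}_{c,k}\tilde{\mathbf{V}}_k)=(\mathbf{g}^H\mathbf{V}_k^\star\mathbf{g})^2/(\mathbf{g}^H\mathbf{V}_k^\star\mathbf{g})=\mathrm{tr}(\mathbf{G}_{c,k}\mathbf{V}_k^\star)$. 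The key inequality is $\mathbf{V}_k^\star-\tilde{\mathbf{V}}_k\succeq 0$. For any $\mathbf{u}$, Cauchy--Schwarz in the semi-inner product induced by $\mathbf{V}_k^\star$ gives
\begin{align}
|\mathbf{u}^H\mathbf{V}_k^\star\mathbf{g}|^2\le(\mathbf{u}^H\mathbf{V}_k^\star\mathbf{u})(\mathbf{g}^H\mathbf{V}_k^\star\mathbf{g}),
\end{align}
which is exactly $\mathbf{u}^H(\mathbf{V}_k^\star-\tilde{\mathbf{V}}_k)\mathbf{u}\ge 0$.

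\textbf{Checking the constraints.} Constraints \eqref{eq:P1-gamma}, \eqref{eq:P1-E} and \eqref{eq:P1-P} involve only $\mathbf{R}^\star$ and the scalars, so they are untouched. Constraint \eqref{eq:P1-R} follows from $\mathbf{R}^\star\succeq\sum_k\mathbf{V}_k^\star\succeq\sum_k\tilde{\mathbf{V}}_k$. For \eqref{eq:P1-alpha}, the SINR is $t/(\mathrm{tr}(\mathbf{G}_{c,k}\mathbf{R}^\star)-t+1)$ with $t=\mathrm{tr}(\mathbf{G}_{c,k}\mathbf{V}_k)$. Because $\mathbf{R}$ is kept fixed, only $\mathbf{V}_k$ enters, and $t$ is preserved, so the constraint value is unchanged. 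For the eavesdropper constraints \eqref{eq:P1-beta}, the SINR equals $s/(\mathrm{tr}(\mathbf{G}_{s,j}\mathbf{R}^\star)-s+1)$ with $s=\mathrm{tr}(\mathbf{G}_{s,j}\mathbf{V}_k)$. This is increasing in $s$ on the feasible range $s\le\mathrm{tr}(\mathbf{G}_{s,j}\mathbf{R}^\star)$, and $\mathbf{V}_k^\star\succeq\tilde{\mathbf{V}}_k$ gives $\tilde s\le s^\star$. So the left-hand side can only decrease and \eqref{eq:P1-beta} remains satisfied.

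\textbf{Main obstacle.} The delicate point is the eavesdropper constraint. Its monotonicity relies on $\mathbf{R}$ being an independent variable kept at $\mathbf{R}^\star$, with the removed power $\mathbf{V}_k^\star-\tilde{\mathbf{V}}_k$ implicitly absorbed into the AN part $\mathbf{R}^\star-\sum_k\tilde{\mathbf{V}}_k$. That absorbed power acts as interference both at the CUs and at the eavesdroppers. I would state this reinterpretation explicitly, since $\mathbf{V}_0=\mathbf{R}^\star-\sum_k\tilde{\mathbf{V}}_k\succeq 0$, so that the reconstructed beamformers are physically meaningful. Along with it I would verify the Cauchy--Schwarz step carefully, including the case where $\mathbf{V}_k^\star$ is singular.
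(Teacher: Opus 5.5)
Your argument is correct and is the same rank-reduction construction the paper relies on by citing \cite{liu2020joint, ren2022optimal}. You keep $\mathbf{R}^\star$ fixed and use $\mathrm{tr}(\mathbf{G}_{c,k}\tilde{\mathbf{V}}_k)=\mathrm{tr}(\mathbf{G}_{c,k}\mathbf{V}_k^\star)$ together with $\tilde{\mathbf{V}}_k\preceq\mathbf{V}_k^\star$ (by Cauchy--Schwarz), so \eqref{eq:P1-alpha} is unchanged, \eqref{eq:P1-R} still holds, and the freed power becomes artificial noise that can only lower the eavesdropping SINR in \eqref{eq:P1-beta}.

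One small nit: your degenerate-case fallback $\tilde{\mathbf{V}}_k=\mathbf{0}$ has rank zero, not one. It is cleaner to rely on your other observation: when $\gamma^\star>0$, since $\beta_k^\star\ge 0$, we get $\alpha_k^\star\ge\gamma^\star+\beta_k^\star>0$, hence $\mathrm{tr}(\mathbf{G}_{c,k}\mathbf{V}_k^\star)>0$, which the statement's formula presupposes anyway.
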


\begin{proof}
Please refer to \cite{liu2020joint, ren2022optimal}.
\end{proof}

However, problem \eqref{eq:P2} is still non-convex due to constraints \eqref{eq:P1-alpha} and \eqref{eq:P1-beta}. By introducing auxiliary variables $\mathbf{a} \in \mathbb{R}^K$ and $\mathbf{b} \in \mathbb{R}^{J}$, problem \eqref{eq:P2} can be reformulated as
\begin{subequations}
\label{eq:P3}
\begin{align}
\max_{\mathbf{V}, \mathbf{R}, \mathbf{a}, \mathbf{b}, \boldsymbol{\alpha}, \boldsymbol{\beta}, \gamma, \delta} \quad 
& \gamma \\
\text{s.t.} \quad\quad\quad
& \ln\!\left(1 + \mathrm{tr}(\mathbf{G}_{c,k} (\mathbf{R} - \mathbf{V}_k)) \right) \le a_k , \quad \forall k \in \mathcal{K}, \label{eq:P3-a} \\
& \ln\!\left(1 + \mathrm{tr}(\mathbf{G}_{s,j} \mathbf{R}) \right) \le b_{j}, \quad \forall j \in \mathcal{J}, \label{eq:P3-b} \\
& \ln\!\left(1 + \mathrm{tr}(\mathbf{G}_{c,k} \mathbf{R}) \right) - a_k \ge \alpha_k , \quad \forall k \in \mathcal{K}, \label{eq:P3-alpha} \\
& b_{j} - \ln\!\left(1 + \mathrm{tr}(\mathbf{G}_{s,j} (\mathbf{R} - \mathbf{V}_k)) \right) \le \beta_k, \quad \forall j \in \mathcal{J}, \forall k \in \mathcal{K}, \label{eq:P3-beta} \\
& \eqref{eq:P1-gamma} - \eqref{eq:P1-V}.
\end{align}
\end{subequations}
It can be observed that problem \eqref{eq:P3} remains non-convex due to the non-convex constraints involving concave logarithmic functions. In the next step, we develop a convex relaxation of \eqref{eq:P3} based on convex hull function.

\subsection{Convex Relaxation}

In problem \eqref{eq:P3}, constraints \eqref{eq:P3-a} and \eqref{eq:P3-b} are non-convex due to the presence of concave logarithmic functions in inequality constraints. To develop a BB algorithm, we construct convex relaxations for these constraints \cite{tuy1998convex}. Specifically, these constraints can be rewritten in the form of the non-convex set $\mathcal{D} = \{(x, y) \in \mathbb{R}_+^2 \mid y \le e^x \}$.
Assuming $x \in [l,u]$, define $\mathcal{D}_{[l,u]} = \{(x,y) \mid y \le e^x, \ x \in [l,u] \}$. The convex envelope of this set, denoted by $\mathrm{Conv}(\mathcal{D}_{[l,u]})$, provides the tightest convex relaxation over $[l,u]$. Since $e^x$ is convex, we have
\begin{align}
e^x \le f_{[l,u]}(x) \triangleq \frac{e^u - e^l}{u - l}(x - l) + e^l, \quad \forall x \in [l,u].
\end{align}
Thus, the convex envelope is given by $\mathrm{Conv}(\mathcal{D}_{[l,u]}) = \{(x,y) \in \mathbb{R}_+^2 \mid y \le f_{[l,u]}(x), \ x \in [l,u]\}$ (see Fig.~\ref{fig:con-rel}).

\begin{figure}[htbp]
    \centering
    \includegraphics[width=0.9\textwidth]{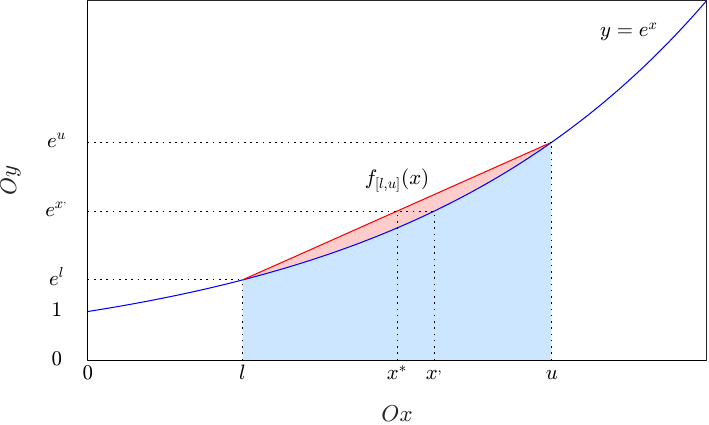}
    \caption{Convex envelope of the non-convex set $\mathcal{D}_{[l,u]}$.}
    \label{fig:con-rel}
\end{figure}
Using this result, we obtain the following convex relaxation of problem \eqref{eq:P3}:
\begin{subequations}
\label{eq:P4}
\begin{align}
\max_{\mathbf{V}, \mathbf{R}, \mathbf{a}, \mathbf{b}, \boldsymbol{\alpha}, \boldsymbol{\beta}, \gamma, \delta} \quad 
& \gamma \\
\text{s.t.} \quad\quad\quad
& 1 + \mathrm{tr}(\mathbf{G}_{c,k} (\mathbf{R} - \mathbf{V}_k)) 
\le f_{[a_k^l, a_k^u]} (a_k), \quad \forall k \in \mathcal{K}, \\
& 1 + \mathrm{tr}(\mathbf{G}_{s,j} \mathbf{R}) 
\le f_{[b_j^l, b_j^u]} (b_j), \quad \forall j \in \mathcal{J}, \\
& \mathbf{a}^{l} \le \mathbf{a} \le \mathbf{a}^{u}, \\
& \mathbf{b}^{l} \le \mathbf{b} \le \mathbf{b}^{u}, \\
& \eqref{eq:P3-alpha}, \eqref{eq:P3-beta}, \eqref{eq:P1-gamma} - \eqref{eq:P1-V}.
\end{align}
\end{subequations}
Problem \eqref{eq:P4} is a convex optimization problem, as the objective function is linear and all constraints are convex. Therefore, it can be efficiently solved using standard convex optimization solvers such as CVXPY \cite{diamond2016cvxpy}. 

\subsection{Proposed BB Algorithm}
To globally solve problem \eqref{eq:P3}, we adopt a BB framework based on the convex relaxation \eqref{eq:P4}. Let $\mathcal{B} = [\mathbf{a}^{(l)}, \mathbf{a}^{(u)}] \times [\mathbf{b}^{(l)}, \mathbf{b}^{(u)}]$ denote a box (hyper-rectangle) in the space of auxiliary variables $\mathbf{a}, \mathbf{b}$. The BB algorithm proceeds by recursively partitioning $\mathcal{B}$ and computing upper and lower bounds over each subregion.

\paragraph{Initialization} We initialize the box $\mathcal{Q}_0 = \prod_{k \in \mathcal{K}} [0, a_k^{max} ] \times \prod_{j \in \mathcal{J}} [0, b_j^{max}] $, where
\begin{align}
\label{eq:B0}
a_{k}^{max} = \ln\!\left({1 + \mathrm{tr}(\mathbf{G}_{c,k}) P_b^{max} } \right), \quad k \in \mathcal{K}, \\
b_{j}^{max} = \ln\!\left({1 + \mathrm{tr}(\mathbf{G}_{s,j} ) P_b^{max} } \right), \quad j \in \mathcal{J}.
\end{align}

\paragraph{Bounding}
For a given box $\mathcal{B}$, the upper bound is obtained by solving the convex relaxation \eqref{eq:P4}, denoted by $\mathrm{UB}(\mathcal{B})$. Let $\{\mathbf{V}^\star, \mathbf{R}^\star, \mathbf{a}^\star, \mathbf{b}^\star, \boldsymbol{\alpha}^\star, \boldsymbol{\beta}^\star, \gamma^\star, \delta^\star \}$ be the optimal solution of \eqref{eq:P4}. A feasible solution $\mathrm{S} (\mathcal{B}) = \{ \mathbf{V}^\star, \mathbf{R}^\star, \mathbf{a}', \mathbf{b}', \boldsymbol{\alpha}', \boldsymbol{\beta}', \gamma', \delta^\star \}$ to the original problem \eqref{eq:P3} can be constructed as
\begin{align}
\label{eq:LB}
a'_k &= \ln (f_{[a^l_k,a^u_k]} (a_k^*)), \quad \forall k \in \mathcal{K}, \\    
b'_j &= \ln (f_{[b^l_j,b^u_j]} (b_j^*)), \quad \forall j \in \mathcal{J}, \\
\alpha_k' &= \alpha_k^\star - (a'_k - a_k^*), \quad k \in \mathcal{K}, \\
\beta_k' &= \beta_k^* + \max_{j \in \mathcal{J}} (b_j' - b_j^*)  , \quad k \in \mathcal{K}, \\
\gamma' &= \gamma^* - \max_{k \in \mathcal{K}} (a_k' - a_k^\star) - \max_{j \in \mathcal{J}} (b_j' - b_j^\star).
\end{align}
This construction ensures feasibility since the convex envelope provides a valid upper bound of the exponential function over the box. The resulting objective value yields a lower bound $\mathrm{LB}(\mathcal{B}) = \gamma'$.

\paragraph{Termination}
Let $U^{(\iota)}$ and $L^{(\iota)}$ denote the best upper and lower bounds at iteration $\iota$, respectively. The algorithm terminates when
\begin{align}
\label{eq:Tem}
U^{(\iota)} - L^{(\iota)} \le \epsilon,
\end{align}
where $\epsilon > 0$ is the given error tolerance.

\paragraph{Branching}
If the stopping criterion in \eqref{eq:Tem} is not satisfied, the algorithm proceeds by branching the current box. At each iteration, the box with the largest upper bound is selected for partitioning. 
Let $ \mathbf{z}^u = [\mathbf{a}^{*T}, \mathbf{b}^{*T}]^T, \mathbf{z}^l = [\mathbf{a}'^T, \mathbf{b}'^T]^T$ denote the solutions obtained from the upper bound problem and the constructed feasible solution  associated with a box $\mathcal{B}$, respectively. The branching dimension is chosen as
\begin{align}
d^\star = \arg\max_{d} \left\{ z_d^{u} - z_d^{l} \right\},
\end{align}
i.e., the dimension with the largest interval length. Let $m_{d^\star} = \left( z_{d^\star}^{l} + z_{d^\star}^{u} \right)/2$ denote the midpoint. The box $\mathcal{B}^{(\iota)}$ is then partitioned into two sub-boxes:
\begin{subequations}
\label{eq:Branch}
\begin{align}
\mathcal{B}_l^{(\iota)} &= \left\{ \mathbf{z} \in \mathcal{B}^{(\iota)} \mid z_{d^\star} \le m_{d^\star} \right\}, \\
\mathcal{B}_u^{(\iota)} &= \left\{ \mathbf{z} \in \mathcal{B}^{(\iota)} \mid z_{d^\star} \ge m_{d^\star} \right\}.
\end{align}
\end{subequations}

\subsection{Global Convergence and Worst-Case Iteration Complexity}

In this subsection, we present theoretical guarantees for the proposed BB algorithm. We first define the notion of an $\epsilon$-optimal solution to problem \eqref{eq:P3}.

\begin{definition}
Given any $\epsilon > 0$, a feasible solution 
$\{ \mathbf{V}, \mathbf{R}, \mathbf{a}, \mathbf{b}, \boldsymbol{\alpha}, \boldsymbol{\beta}, \gamma, \delta \}$ 
is called an $\epsilon$-optimal solution of problem \eqref{eq:P3} if it satisfies
\begin{align}
\gamma^* - \gamma \le \epsilon,
\end{align}
where $\gamma^*$ denotes the global optimal value of problem \eqref{eq:P3}.
\end{definition}
\begin{theorem}
If the BB algorithm terminates, the returned solution $S$ by the algorithm is an $\epsilon$-optimal solution of problem \eqref{eq:P3}.
\end{theorem}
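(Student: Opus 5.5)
The argument has two parts: the global upper bound $U^{(\iota)}$ always dominates $\gamma^*$, and the returned solution $S$ is feasible for \eqref{eq:P3} with objective value $L^{(\iota)}$. Once both hold, termination gives
\begin{align}
\gamma^* - \gamma_S \le U^{(\iota)} - L^{(\iota)} \le \epsilon .
\end{align}

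\textbf{Validity of the upper bound.} First I show that $\mathcal{Q}_0$ contains the $(\mathbf{a},\mathbf{b})$-part of some optimal solution of \eqref{eq:P3}. At optimality one may take $a_k=\ln(1+\mathrm{tr}(\mathbf{G}_{c,k}(\mathbf{R}-\mathbf{V}_k)))$ and $b_j=\ln(1+\mathrm{tr}(\mathbf{G}_{s,j}\mathbf{R}))$, because decreasing $a_k$ or $b_j$ only loosens \eqref{eq:P3-alpha} and \eqref{eq:P3-beta}. These values are nonnegative. They are also at most $a_k^{max}$ and $b_j^{max}$ by \eqref{eq:B0}, since $\mathrm{tr}(\mathbf{G}\mathbf{X})\le \mathrm{tr}(\mathbf{G})\,\mathrm{tr}(\mathbf{X})$ for $\mathbf{G},\mathbf{X}\succeq 0$, $\mathbf{0}\preceq \mathbf{R}-\mathbf{V}_k\preceq\mathbf{R}$, and $\mathrm{tr}(\mathbf{R})\le P_b^{max}$.

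Next, for any box $\mathcal{B}$, every point of \eqref{eq:P3} with $(\mathbf{a},\mathbf{b})\in\mathcal{B}$ is feasible for \eqref{eq:P4} on $\mathcal{B}$. Indeed, $1+\mathrm{tr}(\cdot)\le e^{a_k}\le f_{[a_k^l,a_k^u]}(a_k)$ by convexity of $e^x$. Hence $\mathrm{UB}(\mathcal{B})\ge\sup\{\gamma : \text{feasible in \eqref{eq:P3}},\ (\mathbf{a},\mathbf{b})\in\mathcal{B}\}$.

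Branching by \eqref{eq:Branch} splits a box into two boxes whose union is the parent. By induction, the active boxes therefore always cover the part of $\mathcal{Q}_0$ that has not been discarded. A box is discarded only when its UB is below the incumbent LB, so it cannot contain a strictly better point. It follows that $U^{(\iota)}=\max_{\mathcal{B}}\mathrm{UB}(\mathcal{B})\ge\gamma^*$.

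\textbf{Feasibility of the lower-bound solution.} This is the main step. Let $(\mathbf{V}^\star,\mathbf{R}^\star,\dots)$ solve \eqref{eq:P4}, and take $\mathbf{a}',\mathbf{b}'$ from \eqref{eq:LB}. Then $e^{a'_k}=f(a_k^\star)\ge 1+\mathrm{tr}(\mathbf{G}_{c,k}(\mathbf{R}^\star-\mathbf{V}_k^\star))$, so \eqref{eq:P3-a} holds exactly, and likewise \eqref{eq:P3-b}. Moreover $a'_k\ge a_k^\star$ and $b'_j\ge b_j^\star$, because $f\ge e^x$.

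For \eqref{eq:P3-alpha}, $\ln(1+\mathrm{tr}(\mathbf{G}_{c,k}\mathbf{R}^\star))-a'_k\ge\alpha_k^\star-(a'_k-a_k^\star)=\alpha'_k$.

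For \eqref{eq:P3-beta}, $b'_j-\ln(\cdot)\le\beta_k^\star+(b'_j-b_j^\star)\le\beta'_k$.

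Then $\alpha'_k-\beta'_k\ge\gamma^\star-(a'_k-a_k^\star)-\max_j(b'_j-b_j^\star)\ge\gamma'$, which gives \eqref{eq:P1-gamma}.

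The remaining constraints \eqref{eq:P1-E}--\eqref{eq:P1-V} involve only $\mathbf{V}^\star,\mathbf{R}^\star,\delta^\star$, which are unchanged, so they still hold. The one thing I would flag is whether $\mathbf{a}'$ must stay inside the box. It need not, because \eqref{eq:P3} has no box constraint.

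Thus $S(\mathcal{B})$ is feasible for \eqref{eq:P3} with value $\gamma'=\mathrm{LB}(\mathcal{B})$. The incumbent $S$ attains $L^{(\iota)}$, so $\gamma^*\ge L^{(\iota)}$. Combining with $U^{(\iota)}\ge\gamma^*$ and the stopping rule \eqref{eq:Tem} yields $\gamma^*-\gamma_S\le\epsilon$.

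Finally, by the earlier rank-one theorem, \eqref{eq:P2} and hence \eqref{eq:P3} admit rank-one optimal solutions. This lets $S$ be mapped to beamformers for \eqref{eq:P0} without loss, although that step is not needed for the statement itself.
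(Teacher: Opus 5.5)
Your proof is correct and follows the same route as the paper's: it uses the sandwich $L^{(\iota)} \le \gamma^* \le U^{(\iota)}$ together with the stopping rule \eqref{eq:Tem}. Unlike the paper, you actually verify what it only asserts, namely that $\mathcal{Q}_0$ contains an optimal $(\mathbf{a},\mathbf{b})$, that \eqref{eq:P4} relaxes \eqref{eq:P3} on each box, and that $S(\mathcal{B})$ is feasible for \eqref{eq:P3} with value $\gamma'$. One small precision: if the box containing the optimum has been pruned, you get $\gamma^* \le L^{(\iota)}$, not $U^{(\iota)} \ge \gamma^*$, so the correct invariant is $\gamma^* \le \max\{U^{(\iota)}, L^{(\iota)}\}$; this still gives $\gamma^* - L^{(\iota)} \le \epsilon$ at termination.
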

\begin{proof}
At iteration $\iota$-th of the BB algorithm, the feasible region $\mathcal{Q}^{(0)}$ is partitioned into a list $\mathcal{L}^{(\iota)} $ of sub-boxes and the global solution of the given problem instance must lie in one of them, we denote the corresponding subset as $\mathcal{B^*}$ ($\mathcal{B}^* \in \mathcal{L}^{(\iota)}$). By construction of the BB framework, the upper bound satisfies $ U^{(\iota)} = \max_{\mathcal{B} \in \mathcal{L^{(\iota)}}} UB (\mathcal{B}) \ge UB (\mathcal{B}^*) \ge \gamma^*$, since it is obtained from the relaxation of the original problem. Meanwhile, the lower bound satisfies $L^{(\iota)} \le \gamma^*$, as it is obtained from a feasible solution of the original problem. Therefore, we have $L^{(\iota)} \le \gamma^* \le U^{(\iota)}$. The algorithm terminates when $U^{(\iota)} - L^{(\iota)} \le \epsilon$. Combining the above inequalities yields $0 \le \gamma^* - L^{(\iota)} \le U^{(\iota)} - L^{(\iota)} \le \epsilon$, which implies that the obtained solution $\mathcal{S}^{(\iota)} $ is $\epsilon$-optimal. This completes the proof.
\end{proof} 

\begin{algorithm}[htb]
\caption{Iterative BB Algorithm solving Problem \eqref{eq:P3}.}
\begin{algorithmic}[1]
\label{Alg:brb}
\STATE \textbf{Initialization}: 
\STATE Set a tolerance $\epsilon > 0$; 
\STATE Initialize box $\mathcal{Q}^{(0)}$ as \eqref{eq:B0}, $\mathcal{P}^{(1)} = \{\mathcal{Q}^{(0)}\}, \mathcal{L} = \varnothing, L^{(0)} = 0$. 
\FOR {$\iota = 1, 2, ...$}
\FOR {each box $\mathcal{B}$ in $\mathcal{P}^{(\iota)} $}
\STATE {\it // Bounding}
\STATE Solve the convex problem \eqref{eq:P4} over $\mathcal{B}$ to update upper bound $\mathrm{UB}(\mathcal{B})$;
\STATE Follow \eqref{eq:LB} to get solution $\mathrm{S}(\mathcal{B})$ and lower bound $\mathrm{LB}(\mathcal{B})$;
\STATE {\it // Pruning}
\IF{$\mathrm{UB}(\mathcal{B}) > L^{(\iota)} $}
\STATE Add box $\mathcal{B}$ into the list $\mathcal{L}$.
\ENDIF 
\STATE {\it // Update lower bound}
\IF{$\mathrm{LB}(\mathcal{B}) > L^{(\iota)} $}
\STATE Update $L^{(\iota)} \leftarrow \mathrm{LB}(\mathcal{B})$ and $\mathcal{S}^{(\iota)} \leftarrow \mathrm{S}(\mathcal{B})$;
\ENDIF
\ENDFOR
\STATE Choose the maximum upper bound $U^{(\iota)} = \max_{\mathcal{B} \in \mathcal{L}} UB(\mathcal{B}) $
\STATE {\it // Termination} 
\IF {$ U^{(\iota)} - L^{(\iota)} \leq \epsilon$}
\STATE Terminate the algorithm
\ENDIF
\STATE {\it // Branching}
\STATE Choose the box $\mathcal{Q}^{(\iota)}$ that has the upper bound $U^{(\iota)}$ from $\mathcal{L}^{(\iota)}$;
\STATE Branch $\mathcal{Q}^{(\iota)}$ into 2 sets $\mathcal{B}_l^{(\iota)}, \mathcal{B}_u^{(\iota)}$ follow \eqref{eq:Branch}; 
\STATE Set $\mathcal{P}^{(\iota+1)} = \{\mathcal{B}_l^{(\iota)}, \mathcal{B}_u^{(\iota)} \}, L^{(\iota+1)} =  L^{(\iota)} \setminus \{ \mathcal{Q}^{(\iota)}\}$; 
\ENDFOR
\STATE \textbf{Output}:
Return the optimized BS beamforming matrix. 
\end{algorithmic}
\end{algorithm}

\begin{lemma}
For any given $\epsilon > 0$, the BB algorithm will return an $\epsilon$-optimal solution if
\begin{align}
\max \{ \max_{k \in \mathcal{K}} (a_{k}^u - a_{k}^l), \max_{j \in \mathcal{J}} (b_{j}^u - b_{j}^l) \} \le \omega = W_0 (\rho) - W_{-1} (\rho)
\end{align}
where $\rho = -1/\mathrm{e}^{1+\epsilon/2} \ge -1/e$ and $W_0(.), W_{-1} (.) $ are Lambert function. 
\end{lemma}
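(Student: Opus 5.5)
The plan is to bound, on a single box $\mathcal{B}$, the gap $\mathrm{UB}(\mathcal{B})-\mathrm{LB}(\mathcal{B})=\gamma^\star-\gamma'$ by a function of the box widths. I will then show this function is at most $\epsilon$ once every width is at most $\omega$.

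The reduction to a single box goes as follows. The algorithm always splits the box $\mathcal{Q}^{(\iota)}$ that attains $U^{(\iota)}=\mathrm{UB}(\mathcal{Q}^{(\iota)})$, and $L^{(\iota)}\ge \mathrm{LB}(\mathcal{Q}^{(\iota)})$. So as soon as the box attaining the maximal upper bound satisfies $\mathrm{UB}-\mathrm{LB}\le\epsilon$, the test \eqref{eq:Tem} fires. By the preceding theorem, the returned solution is then $\epsilon$-optimal. It therefore suffices to prove $\mathrm{UB}(\mathcal{B})-\mathrm{LB}(\mathcal{B})\le\epsilon$ for every box whose side lengths are all at most $\omega$.

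From the construction \eqref{eq:LB},
\begin{align}
\gamma^\star-\gamma'=\max_{k}(a_k'-a_k^\star)+\max_j(b_j'-b_j^\star),
\end{align}
so it is enough to show $g_{[l,u]}(x)\triangleq \ln f_{[l,u]}(x)-x\le \epsilon/2$ for all $x\in[l,u]$ whenever $u-l\le\omega$.

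Next I compute the worst case of $g$ in closed form. Write $x=l+t$ with $t\in[0,w]$, $w=u-l$, and set $c=(e^w-1)/w\ge 1$. Then $f_{[l,u]}(x)=e^l(1+ct)$, and hence
\begin{align}
g_{[l,u]}(x)=\ln(1+ct)-t .
\end{align}
This is concave in $t$. Its stationary point satisfies $1+ct=c$, which gives the maximal value
\begin{align}
\bar g(w)=\ln c-1+\frac1c .
\end{align}
The maximal value depends only on $w$, not on $l$. The function $h(c)=\ln c+1/c-1$ is increasing for $c\ge1$, and $c(w)$ is increasing in $w$. Hence $\bar g$ is increasing in $w$, and it suffices to verify $\bar g(\omega)=\epsilon/2$.

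The final step, which I expect to be the main obstacle, is identifying the Lambert-function expression. Let $x_0=W_0(\rho)$ and $x_1=W_{-1}(\rho)$, both negative roots of $xe^x=\rho$. I take $w=x_0-x_1=\omega$ and $c=-1/x_0$.

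First I check that this pair is consistent with $c=(e^w-1)/w$. From $x_0e^{x_0}=x_1e^{x_1}$ we get $e^{x_0-x_1}=x_1/x_0$. Therefore
\begin{align}
e^w-1=\frac{x_1-x_0}{x_0}=-\frac{w}{x_0}=wc,
\end{align}
as required.

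Then I evaluate $\bar g(\omega)$. Write $s=1/c=-x_0$. The defining relation $x_0e^{x_0}=\rho=-e^{-(1+\epsilon/2)}$ becomes $se^{-s}=e^{-(1+\epsilon/2)}$, that is, $s-\ln s=1+\epsilon/2$. Since $\ln c=-\ln s$ and $1/c=s$, this gives
\begin{align}
\bar g(\omega)=\ln c-1+\frac1c=s-\ln s-1=\frac{\epsilon}{2}.
\end{align}

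Care is needed on two points: that $\rho\in[-1/e,0)$, so both real branches exist, and that $x_0\in[-1,0)$, so $c\ge1$ lies on the increasing branch of $h$. With these checked, monotonicity gives $\mathrm{UB}-\mathrm{LB}\le 2\bar g(\omega)=\epsilon$ for any box of width at most $\omega$, completing the argument.
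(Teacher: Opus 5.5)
Your proposal is correct and follows essentially the same route as the paper: you bound $\ln f_{[l,u]}(x)-x$ by its stationary-point value $\ln c+1/c-1$ with $c=(e^{w}-1)/w$, identify $c=-1/W_0(\rho)$ via $e^{W_0(\rho)-W_{-1}(\rho)}=W_{-1}(\rho)/W_0(\rho)$, evaluate the bound to $\epsilon/2$, and add the $\mathbf{a}$- and $\mathbf{b}$-contributions for the box attaining $U^{(\iota)}$. Your explicit check that this bound increases with the width $w$ is a useful addition, since the paper relies on that monotonicity only implicitly when passing from $\bar f(a_k^u-a_k^l)$ to $\bar f(\omega)$.
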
 
\begin{proof}
First, we consider $x \in [x_0, x_0 + \omega]$. By applying the first-order optimality condition, we obtain an upper bound on $\ln (f_{[x_0, x_0 + \omega]} (x)) - x$ as
\begin{align}
\ln (f_{[x_0, x_0 + \omega]} (x)) - x 
&\le \bar{f} (\omega) \triangleq  \ln \!\left(\frac{\mathrm{e}^\omega-1}{\omega}\right) + \frac{\omega}{\mathrm{e}^\omega-1} - 1, \quad \forall x \in [x_0, x_0 + \omega ].
\end{align}
Since $W_0 (\rho) \mathrm{e}^{W_0 (\rho)} = W_{-1} (\rho) \mathrm{e}^{W_{-1} (\rho)} = \rho $, we have
\begin{align}
\frac{\mathrm{e}^\omega-1}{\omega} &= \frac{\mathrm{e}^{W_0 (\rho) - W_{-1} (\rho)} -1}{W_0 (\rho) - W_{-1} (\rho)} = \frac{\frac{{W_{-1} (\rho)}}{{W_{0} (\rho)}} - 1}{W_0 (\rho) - W_{-1} (\rho)}  = -\frac{1}{W_0(\rho)}
\end{align}
Hence $ \bar{f} (\omega) = \ln \left(- {1}/{W_0(\rho)} \right) - {W_0(\rho)} -1 = -\ln (- {W_0(\rho)} \mathrm{e}^{{W_0(\rho)}} ) -1 = -\ln (- \rho) -1 = \epsilon/2 $. Let $\gamma^*, \gamma'$ be the solution chosen to maximize upper bound 
\begin{align}
\gamma^* - \gamma' &= \max_{k \in \mathcal{K}} (a_k' - a_k^\star) + \max_{j \in \mathcal{J}} (b_j' - b_j^\star) \\
& \le \max_{k \in \mathcal{K}} \bar{f} ( a_k^u - a_k^l) + \max_{j \in \mathcal{J}} \bar{f} ( b_j^u - b_j^l) \\
& \le \epsilon/2 + \epsilon/2 = \epsilon
\end{align}
Therefore, $U^{(\iota)} = \gamma^*, L^{(\iota)} \ge \gamma',$ the algorithm terminates when $U^{(\iota)} - L^{(\iota)} \le \gamma^* -\gamma' \le \epsilon$. This completes the proof.
\end{proof}

\begin{theorem}
For any given $\epsilon > 0$, the BB algorithm will return an $\epsilon$-optimal solution of the given instance within at most $I_{max}$
iterations, where
\begin{align}
    I_{max} =  \left\lceil \prod_{k \in \mathcal{K}} \frac{a_{max} }{\omega} \times \prod_{j\in\mathcal{J}} \frac{b_{max} }{\omega} \right\rceil + 1
\end{align}
\end{theorem}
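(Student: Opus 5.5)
The plan is a standard counting argument that combines the preceding Lemma with a volume/cardinality bound on the boxes the algorithm can ever produce. By the Lemma, any box $\mathcal{B}$ whose edge lengths in all $K+J$ coordinates are at most $\omega = W_0(\rho)-W_{-1}(\rho)$ is a box on which the constructed feasible point $\mathrm{S}(\mathcal{B})$ satisfies $\mathrm{UB}(\mathcal{B}) - \mathrm{LB}(\mathcal{B}) \le \epsilon$. Such boxes are ``resolved'': if a resolved box is ever the one selected as having the maximal upper bound $U^{(\iota)}$, then $U^{(\iota)} - L^{(\iota)} \le \mathrm{UB}(\mathcal{B}) - \mathrm{LB}(\mathcal{B}) \le \epsilon$, and the algorithm stops. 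It also stops if the box attaining $U^{(\iota)}$ is pruned. Hence, as long as the algorithm has not terminated, the box selected for branching at iteration $\iota$ must have at least one edge longer than $\omega$.

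Next I would bound how many such ``unresolved'' boxes can ever be branched. I would track the partition tree rooted at $\mathcal{Q}_0 = \prod_k [0,a_k^{\max}] \times \prod_j [0,b_j^{\max}]$, writing $a_{\max} = \max_k a_k^{\max}$ and $b_{\max} = \max_j b_j^{\max}$. Each branching replaces a box by two sub-boxes whose volumes sum to the parent's volume. I would then argue that every box that gets branched, having been produced by bisection along its longest relevant edge, cannot be too small: its volume is at least of order $\omega^{K+J}$. Since the leaves of the tree at any time partition $\mathcal{Q}_0$, the number of distinct boxes of volume at least $\omega^{K+J}$ that can coexist, or that can successively be split, is at most
\[
\mathrm{vol}(\mathcal{Q}_0)/\omega^{K+J} \le \prod_{k} \frac{a_{\max}}{\omega} \prod_j \frac{b_{\max}}{\omega}.
\]
Taking the ceiling and adding one for the final, terminating iteration gives $I_{\max}$.

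The main obstacle is making the volume step rigorous. The branching rule in \eqref{eq:Branch} splits at the midpoint of $[z^l_{d^\star}, z^u_{d^\star}]$ chosen from the solution gap, not at the midpoint of the box edge, so children are not guaranteed to be halves. I would first try to handle this by an equivalent idealized argument. Assume bisection at the edge midpoint along the longest edge, which is how the complexity is typically stated. Then every box at depth $t$ has edge lengths at least half of those of a box with all edges above $\omega$, and a grid of side $\omega$ over $\mathcal{Q}_0$ has $\prod (a_{\max}/\omega)\prod(b_{\max}/\omega)$ cells. Each branched box can be charged injectively to a cell intersecting it at depth before all edges drop below $\omega$. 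If the gap-based split must be kept, I would instead note that the Lemma's bound only needs the interval containing $z$ to have length at most $\omega$. I would then charge iterations to the grid cells in which the branching point $m_{d^\star}$ falls, which still yields the stated product bound up to the ceiling.
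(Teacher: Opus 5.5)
Your route is the paper's route: use the Lemma to show that the box selected for branching has an edge longer than $\omega$ while the algorithm runs, then contradict $\mathrm{vol}(\mathcal{Q}_0)$. But the step you flag as the main obstacle is a genuine gap. It is the same gap as in the paper's proof, which observes that the two children have width greater than $\omega/2$ and then asserts that every edge of every sub-box exceeds $\omega$.

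Non-termination only tells you that the \emph{longest} edge of a branched box exceeds $\omega$. Nothing gives the boxes you count a volume of $\omega^{K+J}$. Your injective charging of branched boxes to cells of an $\omega$-grid also fails, because branched boxes form a tree, not a disjoint family. A volume bound limits how many disjoint boxes coexist, not how many are ``successively split''.

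Concretely, take $K=J=1$ and both ranges equal to $4\omega+\tau$ with small $\tau>0$. Longest-edge midpoint bisection then produces $1+2+4+8+16+32=63$ boxes whose longest edge exceeds $\omega$, all of which may be branched before the Lemma forces termination. Yet $I_{\max}=\lceil(4+\tau/\omega)^2\rceil+1=18$.

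What the idealized argument actually proves is this. Under longest-edge bisection, every edge of a box is at least $\min\{r_d,\ell_{\max}/2\}$, where $r_d$ is the range of coordinate $d$, so every leaf has each edge at least $\min\{r_d,\omega/2\}$. Leaves outnumber branchings by one, so you get a bound of order $\prod_d\max\{1,2r_d/\omega\}$. That is essentially $I_{\max}$ with $\omega$ replaced by $\omega/2$, up to a factor $2^{K+J}$ larger. No argument resting only on the Lemma reaches the stated constant.

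Your fallback for the gap-based split does not work either. The Lemma controls $a'_k-a_k^\star$ and $b'_j-b_j^\star$ through the box widths, not through where $m_{d^\star}$ lands. Since $\ln f_{[l,u]}(x)\ge x$ with equality exactly at $x\in\{l,u\}$, the point $m_{d^\star}$ can lie arbitrarily close to an endpoint of the current interval. One child is then almost the whole parent, arbitrarily many consecutive split points can fall in the same grid cell, and no counting bound survives.

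As written, the rule even maximizes $z^u_d-z^l_d$, which is always $\le 0$, since it equals $a_k^\star-a'_k$ or $b_j^\star-b'_j$. This favors coordinates where the relaxed solution sits on the boundary; there $m_{d^\star}$ is an endpoint and one ``child'' coincides with the parent. So an iteration bound of this type requires you to \emph{assume} midpoint bisection of the longest box edge, as the paper's proof tacitly does. You must also state the resulting constant with $\omega/2$ in place of $\omega$.
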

\begin{proof}
Suppose that the algorithm does not terminate within $I_{max}$ iterations. This fact, together with Lemma 1, implies that the interval that is chosen to be partitioned at the $\iota$-th iteration must satisfy max edge of every box is greater than $\omega$. Then, after the partition, the width of the two sub-intervals is greater than $\omega/2$. Based on this, we can conclude that, for each subset partitioned from the
original set $\mathcal{Q}^{(0)}$, there holds every edge is greater than $\omega$. Hence, the volume of each subset is not less than $\omega^{K+J}$ and the total volume of all $I_{max}$ subsets is not less than $I_{max}\omega^{K+J} > \prod_{k \in \mathcal{K}} {a_{max} } \times \prod_{j\in\mathcal{J}} {b_{max} }$.
Obviously, the volume of $\mathcal{Q}^{(0)}$ is $\mathrm{vol}(\mathcal{Q}^{(0)})  =\prod_{k \in \mathcal{K}} {a_{max} } \times \prod_{j\in\mathcal{J}} {b_{max} }$, which further implies that the total volume of all subsets is greater than the one of the original set. This is a contradiction. Hence, the algorithm will terminate within at most $I_{max}$ iterations.
\end{proof}

\begin{corollary}
For any given instance of problem, let $\gamma^*$ be its optimal value and let $\{U^{(\iota)}\}$ and $\{L^{(\iota)}\}$ be the iterates generated by the BB algorithm with $\epsilon = 0$. Then, we have $\{U^{(\iota)}\} \to \gamma^*, \{L^{(\iota)}\} \to \gamma^*$.
\end{corollary}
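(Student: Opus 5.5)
The plan is to combine the sandwich inequality from the proof of Theorem~2 with the finite-termination argument of Theorem~3, applied to a decreasing sequence of tolerances. When $\epsilon = 0$ the termination test \eqref{eq:Tem} is generally never met, so the algorithm runs indefinitely. It still produces the two sequences $\{U^{(\iota)}\}$ and $\{L^{(\iota)}\}$.

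\textbf{Monotonicity and the sandwich.} I would first establish three facts.
\begin{itemize}
\item[(i)] $L^{(\iota)}$ is nondecreasing. It is updated only when a strictly larger feasible value $\mathrm{LB}(\mathcal{B})$ is found.
\item[(ii)] $U^{(\iota)}$ is nonincreasing. Each sub-box $\mathcal{B}_l^{(\iota)},\mathcal{B}_u^{(\iota)}$ of $\mathcal{Q}^{(\iota)}$ has a relaxation \eqref{eq:P4} at least as tight as its parent, since the secant $f_{[l,u]}$ over a subinterval lies below the secant over the larger interval. Hence $\mathrm{UB}$ of each child is at most $\mathrm{UB}(\mathcal{Q}^{(\iota)})$. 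Removing boxes from $\mathcal{L}$ or pruning them cannot increase the maximum.
\item[(iii)] Exactly as in the proof of Theorem~2, $L^{(\iota)} \le \gamma^* \le U^{(\iota)}$ for all $\iota$. This relies on the box containing the optimizer never being pruned, because its upper bound is at least $\gamma^*$, which is at least the incumbent.
\end{itemize}
Consequently both sequences are bounded and monotone, so they converge to limits $L^\infty \le \gamma^* \le U^\infty$. It then suffices to show $U^{(\iota)} - L^{(\iota)} \to 0$.

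\textbf{Gap to zero.} Fix an arbitrary $\epsilon' > 0$ and let $\omega(\epsilon') = W_0(\rho') - W_{-1}(\rho')$ with $\rho' = -1/\mathrm{e}^{1+\epsilon'/2}$. The key observation is that the $\epsilon = 0$ algorithm performs exactly the same bounding, branching and box selection as the algorithm run with tolerance $\epsilon'$; only the stopping rule differs. Therefore the volume argument of Theorem~3 applies verbatim. Suppose the gap $U^{(\iota)} - L^{(\iota)}$ stayed above $\epsilon'$ for more than $I_{\max}(\epsilon')$ iterations. By Lemma~1, every box selected for branching would then have some edge longer than $\omega(\epsilon')$, and the same volume-counting contradiction follows. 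Hence there is an $\iota_0 \le I_{\max}(\epsilon')$ with $U^{(\iota_0)} - L^{(\iota_0)} \le \epsilon'$. By monotonicity, $U^{(\iota)} - L^{(\iota)} \le \epsilon'$ for all $\iota \ge \iota_0$. Since $\epsilon'$ is arbitrary, the gap tends to zero, and combined with the sandwich this gives $U^{(\iota)} \to \gamma^*$ and $L^{(\iota)} \to \gamma^*$.

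\textbf{Main obstacle.} The delicate step is Lemma~1 as it is actually used here. It bounds $\gamma^\star - \gamma'$ on the box attaining the current maximal upper bound, and it needs the edge-length condition on that box. I would make this precise as follows.
\begin{itemize}
\item If the selected box $\mathcal{Q}^{(\iota)}$ has all edges at most $\omega(\epsilon')$, then $U^{(\iota)} = \mathrm{UB}(\mathcal{Q}^{(\iota)})$ and $L^{(\iota)} \ge \mathrm{LB}(\mathcal{Q}^{(\iota)})$. Lemma~1 then gives $U^{(\iota)} - L^{(\iota)} \le \epsilon'$ directly.
\item Otherwise the selected box has an edge longer than $\omega(\epsilon')$ and is bisected. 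This feeds the volume count.
\end{itemize}
One more point needs checking. The branching rule bisects at the midpoint of $[z^l_{d^\star}, z^u_{d^\star}]$ rather than at the box midpoint, so I would verify that the children's edges do not shrink arbitrarily in the count. If this fails, I would fall back on a standard exhaustiveness argument: nested selected boxes have diameters tending to zero, and the continuity of $\bar f$ with $\bar f(0)=0$ then forces the gap to vanish.
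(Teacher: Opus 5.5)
Your proposal is correct and follows the same route as the paper: you combine the sandwich $L^{(\iota)} \le \gamma^* \le U^{(\iota)}$ from Theorem~2 with Theorem~3's finite-termination bound at an arbitrary tolerance $\epsilon'$, noting that the $\epsilon=0$ run coincides with the tolerance-$\epsilon'$ run up to that run's stopping time. Your monotonicity step (ii) is a genuine improvement, because Theorem~3 only gives one iterate with gap at most $\epsilon'$, whereas the paper asserts the bound for all $\iota \ge I_{\max}$ without justifying it; your concern about the branching point bears on the proof of Theorem~3 itself, not on this corollary.
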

\begin{proof}
Theorem 3 shows that, for any given $\epsilon > 0$, there exists an integer $I_{max}$ such that $U^{(\iota)} - L^{(\iota)} < \epsilon $ for all $\iota \ge I_{max}$. This statement is equivalent to $\{U^{(\iota)} - L^{(\iota)}\} \to 0$. Together with Theorem 2 ($L^{(\iota)} \le \gamma^* \le U^{(\iota)}$), this completes the proof.
\end{proof}
\section{Low-Complexity Algorithm}
\subsection{{Proposed SCA-Based Algorithm}}
The proposed BB framework can achieve the globally optimal solution of Problem \eqref{eq:P3}. Nevertheless, its worst-case computational complexity grows exponentially with the number of branching variables, which may become prohibitive for large-scale ISAC systems. To provide a computationally efficient alternative, we develop a low-complexity algorithm based on SCA.

Since $e^x$ is a convex function, its first-order Taylor expansion at a given point $\bar{x}$ provides a lower bound $e^x \ge e^{\bar{x}} + e^{\bar{x}}(x-\bar{x}), \quad \forall x$. Accordingly, at iteration $\iota$, we approximate the non-convex constraints \eqref{eq:P3-alpha} and \eqref{eq:P3-beta} by the linear constraints
\begin{align}
& 1 + \mathrm{tr}(\mathbf{G}_{c,k} (\mathbf{R} - \mathbf{V}_k)) \le e^{\bar{a}_k} (a_k - \bar{a}_k + 1 ), \quad \forall k \in \mathcal{K}, \label{eq:P5-a} \\
& 1 + \mathrm{tr}(\mathbf{G}_{s,j} \mathbf{R}) \le e^{\bar{b}_j} (b_j - \bar{b}_j + 1 ) , \quad \forall j \in \mathcal{J}, \label{eq:P5-b}
\end{align}
where $\bar a_k$ and $\bar b_j$ denote the values obtained from the previous iteration. As a result, the problem \eqref{eq:P3} can be approximated by the following convex program
\begin{subequations}
\label{eq:P5}
\begin{align}
\max_{\mathbf{V}, \mathbf{R}, \mathbf{a}, \mathbf{b}, \boldsymbol{\alpha}, \boldsymbol{\beta}, \gamma, \delta} \quad 
& \gamma \\
\text{s.t.} \quad\quad\quad
& \eqref{eq:P5-a}, \eqref{eq:P5-b} ,  \eqref{eq:P3-alpha}, \eqref{eq:P3-beta}, \eqref{eq:P1-gamma} - \eqref{eq:P1-V}.
\end{align}
\end{subequations}

Problem \eqref{eq:P5} is convex and can be efficiently solved using standard convex program
solvers such as CVXPY \cite{diamond2016cvxpy}. By successively updating the linearization points ${\bar a_k,\bar b_j}$, a sequence of feasible solutions is generated until convergence. The complete procedure is summarized in Algorithm \ref{Alg:sca}.

\begin{algorithm}[htb]
\caption{Iterative SCA Algorithm solving Problem \eqref{eq:P3}.}
\begin{algorithmic}[1]
\label{Alg:sca}
\STATE \textbf{Initialization}: 
\STATE Set a tolerance $\epsilon > 0$; 
\STATE Initialize $[\mathbf{a}^{(0)}; \mathbf{b}^{(0)}] \in \mathcal{Q}^{(0)}$ as \eqref{eq:B0}, $\gamma^{(0)} = -\infty $. 
\FOR {$\iota = 1, 2, ...$}
\STATE Update $\bar{\mathbf{a}} \leftarrow \mathbf{a}^{(\iota-1)} $ and $\bar{\mathbf{b}} \leftarrow \mathbf{b}^{(\iota-1)} $;
\STATE Solve the convex problem \eqref{eq:P5} to update $\mathbf{V}^{(\iota)} , \mathbf{R}^{(\iota)}, \mathbf{a}^{(\iota)}, \mathbf{b}^{(\iota)}, \gamma^{(\iota)}$;
\IF{$ \gamma^{(\iota)} -  \gamma^{(\iota-1)} < \epsilon $}
\STATE Terminate the algorithm;
\ENDIF
\ENDFOR
\STATE \textbf{Output}:
Return the optimized BS beamforming matrix $\mathbf{V}^{(\iota)} , \mathbf{R}^{(\iota)}$. 
\end{algorithmic}
\end{algorithm}

\subsection{Convergence and Complexity}
The proposed algorithm belongs to the class of SCA, where at each iteration, non-convex constraints are replaced by their first-order convex approximations to ensure a fit at the current operating point. Thanks to this mathematical characteristic, the solution obtained at the current iteration is always a feasible point for the approximation problem at the next iteration. Consequently, the value of the objective function increases steadily and non-decreasingly throughout the iteration and is bounded above by the power constraint; the algorithm is guaranteed to converge to a stopping point of the non-convex problem \eqref{eq:P3}. 

At each iteration, the SCA algorithm offers a significant advantage by requiring only the solution of a fully convex SDP. By using standard interior-point solvers, the computational complexity for each iteration increases only with polynomial time for the dimensions of the problem, such as the number of base station antennas and the number of CUs. This contrasts sharply with the BB algorithm framework, whose computational complexity increases exponentially based on the number of branching variables in the worst-case scenario. Therefore, the overall complexity of the SCA algorithm is significantly lower than that of the BB method, making it an extremely practical solution for deployment in large-scale ISAC systems while maintaining near-optimal security performance.

\section{Numerical Results}
\subsection{Simulation scenario}
The simulation scenario considers a system where the BS is located at (0, 0) m, equipped with a ULA of $N_B=16$ antennas. 
The system serves $K = 3$ CUs, specifically CUs moving within a circle centered at (100, 0) with a radius of 50 meters. For sensing purposes, there are $J=3$ sensing targets at angles of $-60^{\circ}, 0^{\circ}, 60^{\circ}$. For the simulated channel model, Rician fading is considered with a Rician factor of 3 dB. The path loss exponent from the BS to CUs and SUs is 2.2. The noise power at all devices is set to -90 dBm. 

Regarding the system parameters, the maximum transmit power at the BS is $P_B^{max} = 20$ dBm. For sensing functionality, the number of samples is $M=181$, the beamwidth is $\Delta \phi =10 ^\circ$, and the maximum beam pattern error is $E^{max} = -20$ dB. The algorithm is implemented with a minimum error threshold of $10^{-2}$, across 100 different channel realizations. 

\subsection{Convergence Behavior} 

Fig.~\ref{fig:conv} compares the convergence performance of the proposed BB algorithm and the low-complexity SCA method. The gap between the UB and LB of BB algorithm decreases rapidly and becomes nearly zero after approximately 110 iterations, verifying convergence to the globally optimal solution. In contrast, the SCA algorithm converges significantly faster while achieving a solution that is very close to the global optimum obtained by the proposed BB algorithm. Furthermore, repeated simulations with different initialization points yield nearly identical objective values, indicating that the SCA method is relatively insensitive to initialization and provides a low-complexity alternative.

\begin{figure}[htb]
    \centering
    \includegraphics[width=0.6\textwidth]{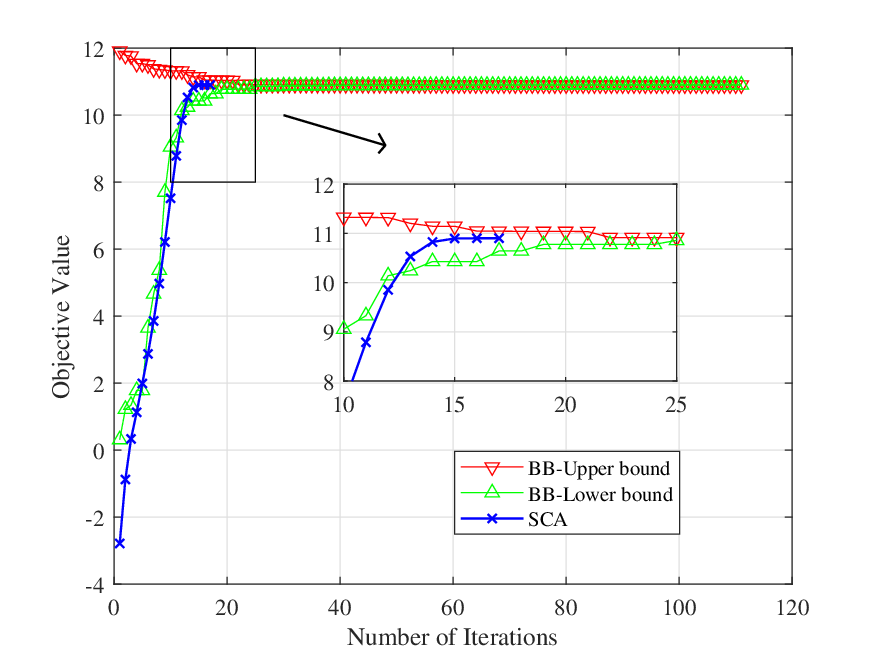}
    \caption{Convergence behavior of the proposed BB and SCA algorithms.}
    \label{fig:conv}
\end{figure}


\begin{figure}[htbp]
    \centering
    \includegraphics[width=0.6\textwidth]{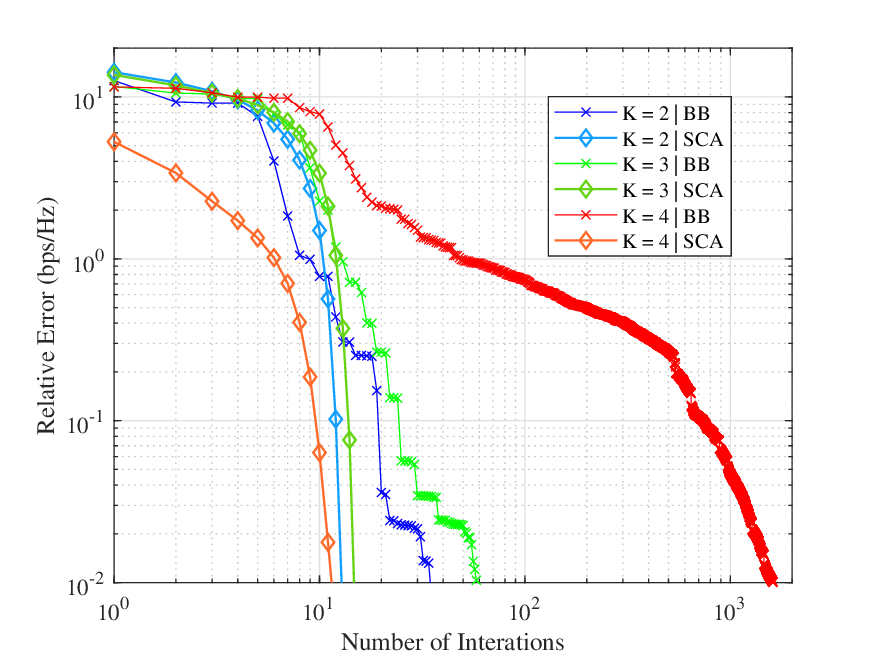}
    \caption{Relative error of the proposed BB algorithm and SCA algorithm.}
    \label{fig:error}
\end{figure}

Fig.~\ref{fig:error} shows the relative error of the proposed BB algorithm and the low-complexity SCA method for different numbers of Cus $K$, where the relative error is computed with respect to the final upper bound obtained by BB. For all considered cases, the relative error decreases monotonically, confirming the convergence of both algorithms. When $K=2$ and $K=3$, both methods converge rapidly and achieve a relative error below $10^{-2}$ within fewer than 100 iterations. Moreover, the convergence curves of SCA closely match those of BB, indicating that the locally optimal solutions found by SCA are very close to the global optimum. As the number of CUs increases to $K=4$, the convergence behavior of BB becomes significantly slower. More than $10^3$ iterations are required to reduce the relative error below $10^{-2} $, reflecting the increased complexity of the global search process as the problem dimension grows. In contrast, SCA still converges within approximately 10 iterations and reaches a solution with a small optimality gap. This observation suggests that, although the worst-case complexity of global optimization increases rapidly with the number of users, the low-complexity SCA algorithm remains an effective and computationally attractive alternative.

\subsection{Analyze the impact of the parameters}
Figure \ref{fig:scan_E} presents a comprehensive and detailed picture of the stringent resource trade-offs in a multi-user ISAC network. The diagram clearly illustrates the contradiction between the two functions: when the beampattern error requirements are tightened, the minimum secrecy rate in all scenarios is drastically reduced to near zero because BS has to devote almost all its energy and spatial degrees of freedom to target scanning; conversely, when the sensing requirements are loosened, the secrecy rate increases sharply but quickly enters a saturation state from -20 dB onwards, indicating that the system is no longer constrained by radar limitations but has reached its maximum transmit power limit or is affected by interference between users.

\begin{figure}[!htb]
    \centering
    \includegraphics[width=0.6\textwidth]{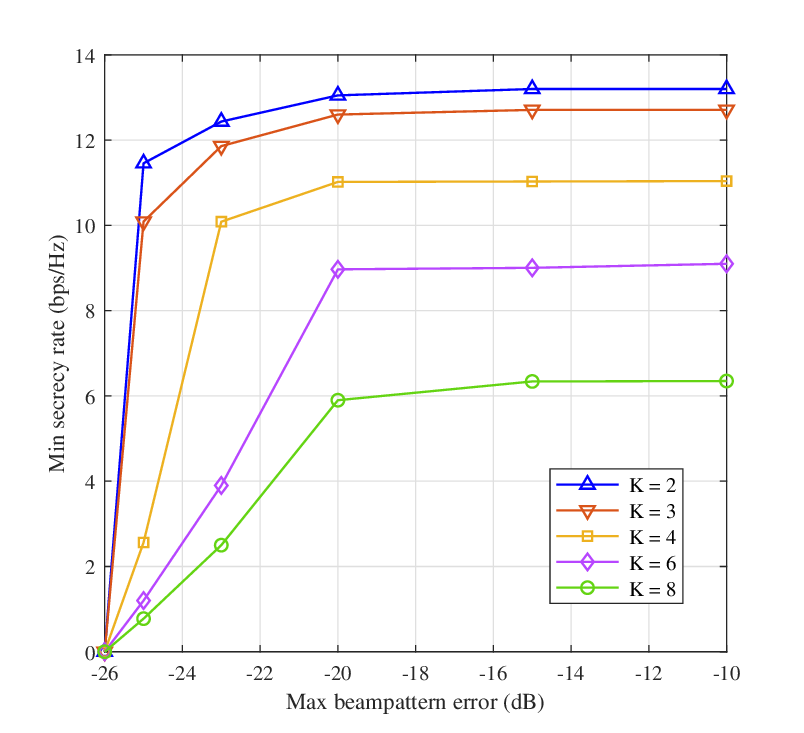}
    \caption{Min secrecy rate versus max beampattern error under different number of CUs.}
    \label{fig:scan_E}
\end{figure}

The graph also clearly shows the physical load limit of the system as the number of CUs increases. With the BS station only equipped with $N_B = 16$ antennas, when having to serve a large number of CUs simultaneously while simultaneously creating beam patterns to track 3 SUs, the system becomes overloaded due to the exhaustion of spatial degrees of freedom, leading to extremely low security performance despite loosening the sensing error. The results from Fig.~\ref{fig:scan_E} are an important database that helps determine the equilibrium operating point (e.g., at a tolerance level of $-23$ dB), thereby ensuring a balance between radar quality and optimizing safe transmission capacity for legitimate users.

Figure \ref{fig:scan_N} thoroughly addresses the hardware bottleneck problem in ISAC networks by evaluating the impact of the number of transmitting station antennas ($N_B$) on the minimum security speed. As the number of antennas increases from 10 to 32, the security rate in all tolerance scenarios gradually increases. The graph shows the ability to overcome the strict trade-off: if at $N_B = 16$, the performance difference between the sensor constraint ($E = 0$ dB) and ($E = -25$ dB) is quite large, then at a scale of 25 to 32 antennas, the curve of $E = -25$ dB almost converges with the other curves. This shows that equipping a large-scale antenna system will allow the ISAC network to simultaneously achieve perfect accuracy with almost no trade-off in user secrecy rate.

\begin{figure}[!htb]
    \centering
    \includegraphics[width=0.6\textwidth]{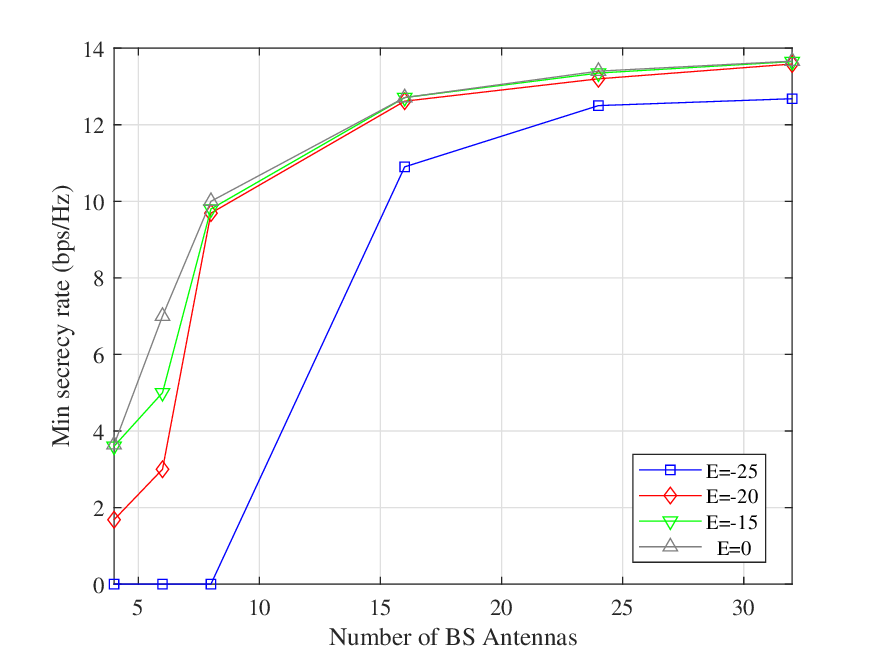}
    \caption{Min secrecy rate versus the number of BS antennas under different max beampattern error.}
    \label{fig:scan_N}
\end{figure}

\section{Conclusions}
In this paper, we investigate a secure ISAC system with unreliable SUs acting as potential eavesdroppers. We construct a max-min secrecy rate optimization problem under constraints on transmit power and sensing quality. To solve this non-convex problem, we develop a global optimization framework based on BB combined with SDR and convex envelope techniques, ensuring convergence to a global optimum within a given tolerance. To address the high computational complexity of the BB framework in large-scale scenarios, we propose a low-complexity algorithm based on the SCA method. The results show that the proposed BB method provides an absolute performance benchmark and the proposed SCA algorithm achieves near-optimal security performance with significantly reduced computational complexity, and becoming a very practical solution for real-world ISAC implementations.

\begin{credits}
\subsubsection{\ackname} 
We would like to thank Viettel High Technology Industries Corporation, Viettel Group for providing essential resources enabling the completion of this work.

\end{credits}
%
%
%
\bibliographystyle{splncs04}
\bibliography{references}

\end{document}